\newtheorem{theorem}{Theorem}
\newtheorem{lemma}{Lemma}
\newtheorem{proposition}{Proposition}
\newtheorem{definition}{Definition}
\DeclareMathOperator{\rank}{rank}
\title{
Bias-Aware BP Decoding of Quantum Codes\\ via Directional Degeneracy
}
\author{
\IEEEauthorblockN{Mohammad Rowshan}
\IEEEauthorblockA{
University of New South Wales (UNSW)\\
Sydney, Australia\\
Email: mrowshan@ieee.org}
}
\begin{document}
\maketitle

\begin{abstract}
We study \emph{directionally informed belief propagation (BP) decoding} for quantum CSS codes, where anisotropic Tanner-graph structure and biased noise concentrate degeneracy along preferred directions. We formalize this by placing orientation weights on Tanner-graph edges, aggregating them into per-qubit directional weights, and defining a \emph{directional degeneracy enumerator} that summarizes how degeneracy concentrates along those directions. A single bias parameter~$\beta$ maps these weights into site-dependent log-likelihood ratios (LLRs), yielding anisotropic priors that plug directly into standard BP$\rightarrow$OSD decoders without changing the code construction. We derive bounds relating directional and Hamming distances, upper bound the number of degenerate error classes per syndrome as a function of distance, rate, and directional bias, and give a MacWilliams-type expression for the directional enumerator. Finite-length simulations under code-capacity noise show significant logical error-rate reductions---often an order of magnitude at moderate physical error rates---confirming that modest anisotropy is a simple and effective route to hardware-aware decoding gains.
\end{abstract}

\section{Introduction}

Quantum LDPC (qLDPC) codes combine sparse parity checks with families of constant rate and linear (or near-linear) distance~\cite{tillich_zemor_2014,leverrier_tillich_zemor_2015,evra_kaufman_zemor_2020,panteleev_kalachev_asymp_2022,leverrier_zemor_tanner_2022,breuckmann_eberhardt_2021}. Lifted-product codes~\cite{panteleev_kalachev_asymp_2022} and quantum Tanner codes~\cite{leverrier_zemor_tanner_2022} provide asymptotically good families with efficient decoders, yet practical decoding at finite blocklength is complicated by short cycles and \emph{degeneracy}, the fact that many distinct errors share a syndrome and act identically on the codespace~\cite{poulin_chung_2008,panteleev_kalachev_deg_2021}.

Physical noise processes are typically anisotropic: dephasing errors can dominate bit flips and mixed errors, and hardware layouts often favor certain coupling directions. Bias-tailored surface codes and related CSS constructions show that aligning stabilizers and decoders with such bias can produce significant threshold and overhead improvements~\cite{tuckett_biased_prx_2019,bonilla_xzzx_2021,roffe_bias_tailored_qldpc_2023}. Most of this literature either changes the \emph{code} to match a given biased channel, or uses decoders whose priors do not explicitly encode directional structure in the Tanner graph.

This work focuses on the decoder side and keeps the code fixed. We annotate the Tanner graphs of a CSS code with directional edge weights that reflect layout, scheduling, or calibrated anisotropy. Summing these edge weights gives per-qubit directional weights and a weighted Hamming cost, which in turn define a directional degeneracy enumerator. A single parameter $\beta$ converts these directional weights into site-dependent LLRs for a standard BP$\rightarrow$OSD decoder. The same $\beta$ controls a family of partition functions that count and reweight degenerate error classes, yielding both analytical insight and a practical tuning knob.

At a high level, the contributions are as follows. First, we introduce directional edge weights and the resulting per-qubit weights, and use them to define a directional degeneracy metric and enumerator. Second, we derive bounds that compare directional and Hamming distances, and prove a directionality-induced degeneracy bound that makes explicit how directional bias reduces the number of degenerate error classes per syndrome. Third, we present an anisotropic BP+OSD decoder that uses the directional weights through a single bias parameter, and we show through finite-length simulations that this anisotropy yields substantial logical error-rate reductions without altering the underlying code.

\section{Notations and Preliminaries}
\label{sec:prelim}

We consider CSS stabilizer codes specified by binary parity-check matrices
\[
H_X\in\mathbb{F}_2^{m_X\times n},\qquad H_Z\in\mathbb{F}_2^{m_Z\times n},
\]
satisfying the commutation condition $H_XH_Z^{\top}=0$. Their row spaces
\[
S_X:=\operatorname{rowsp}(H_X),\qquad S_Z:=\operatorname{rowsp}(H_Z)
\]
encode the supports of $X$- and $Z$-stabilizer generators. Associated classical codes are
\[
C_X:=\ker(H_X),\qquad C_Z:=\ker(H_Z),
\]
with $C_Z\subseteq C_X^\perp$ and $C_X\subseteq C_Z^\perp$ under the standard inner product. The number of logical qubits is
\[
k \;=\; n-\rank(H_X)-\rank(H_Z).
\]

Throughout we use the CSS picture: $X$- and $Z$-errors are decoded separately. For a fixed $Z$-syndrome $s_Z$, let $e_0$ be a solution of $H_Z e=s_Z$. All solutions are of the form $e_0+C_Z$. Two such solutions are indistinguishable if they differ by an $X$-stabilizer, i.e., if their difference lies in $S_X$.

\begin{definition}[Degeneracy classes]
For a fixed $Z$-syndrome $s_Z$, the set of degenerate error classes is the quotient
\begin{equation}
\mathcal{D}_X(s_Z)\;\cong\; (e_0+C_Z)\big/ S_X.
\label{eq:deg-classes}
\end{equation}
Its size is $|\mathcal{D}_X(s_Z)|=2^k$ and does not depend on the syndrome.
\end{definition}

An analogous definition applies on the $Z$ side for $X$-syndromes. The objects in $\mathcal{D}_X(s_Z)$ index distinct logical $X$-actions compatible with the observed $Z$-syndrome. The directional framework developed below reweights these classes based on their alignment with a preferred direction.

\section{Directional Weights and Degeneracy Enumerator}
\label{sec:dir-enum}

\subsection{Directional annotation of the Tanner graph}

We view the CSS code through its Tanner graphs. For qubit $i$, let $N_X(i)$ be the set of adjacent $X$-checks and $N_Z(i)$ the set of adjacent $Z$-checks in bipartite graphs of $H_X$ and $H_Z$.

\begin{definition}[Directional annotation]
A \emph{directionally annotated} CSS code consists of $(H_X,H_Z)$ together with nonnegative edge weight matrices
\[
D_X\in\mathbb{R}_{\ge 0}^{n\times m_X},\qquad
D_Z\in\mathbb{R}_{\ge 0}^{n\times m_Z},
\]
supported on the edges of the $X$- and $Z$-Tanner graphs, respectively. A larger entry $D_X(i,j)$ or $D_Z(i,j)$ indicates that qubit $i$ couples more strongly to check $j$ along a preferred direction, for instance due to geometry, scheduling, or hardware-specific anisotropy (e.g., specific behavior of noise).
\end{definition}

Summing the edge weights incident to qubit $i$ produces a per-qubit directional weight
\begin{equation}
w_i \;:=\; \sum_{\mathclap{j\in N_X(i)}} D_X(i,j)+\sum_{\mathclap{j\in N_Z(i)}} D_Z(i,j),
\quad \bm w=(w_1,\dots,w_n).
\label{eq:per-qubit}
\end{equation}
So $w_i$ says: how strongly qubit  $i$ is involved in ``directionally important" edges overall. 
For a binary error indicator $E\in\{0,1\}^n$, the corresponding directional cost is then
\begin{equation}
\Delta_{\bm w}(E)\;=\;\sum_{i=1}^n w_i E_i
\;=\;\langle \bm w,E\rangle.
\label{eq:weighted-hamming}
\end{equation}

\begin{figure}[t]
  \centering
  \tikzset{>=Latex}
\begin{tikzpicture}[x=1.2cm,y=1.2cm,>=Stealth, font=\small]

\tikzset{
  qubit/.style  = {circle, draw=black, thick, minimum size=7mm, inner sep=0pt, fill=white},
  xcheck/.style = {rectangle, draw=red!70!black, thick, minimum width=7mm, minimum height=7mm, fill=red!8},
  zcheck/.style = {diamond, draw=blue!70!black, thick, minimum size=7.5mm, inner sep=0pt, fill=blue!8},
  DX/.style     = {draw=red!70!black,  line width=#1},
  DZ/.style     = {draw=blue!70!black, line width=#1}
}

\draw[very thick, -{Stealth[length=3mm,width=3mm]}] (-0.2,2.6) -- (3.8,2.6)
  node[midway,above] {$\hat d$ (preferred direction)};

\node[qubit] (q1) at (0, 2) {$q_1$};
\node[qubit] (q2) at (0, 1) {$q_2$};
\node[qubit] (q3) at (0, 0) {$q_3$};

\node[xcheck] (X1) at (3, 2) {$X_1$};
\node[zcheck] (Z1) at (3, 0) {$Z_1$};

\draw[DX=1.6pt] (q1) -- node[above,sloped,red!70!black] {\scriptsize $D_X(1,1)$} (X1);
\draw[DX=0.9pt] (q2) -- node[below,sloped,red!70!black] {\scriptsize $D_X(2,1)$} (X1);

\draw[DZ=1.6pt] (q2) -- node[above,sloped,blue!70!black] {\scriptsize $D_Z(2,1)$} (Z1);
\draw[DZ=0.9pt] (q3) -- node[below,sloped,blue!70!black] {\scriptsize $D_Z(3,1)$} (Z1);

\end{tikzpicture}
  \caption{Directionally annotated Tanner graph. Edge thickness encodes the magnitude of orientation weights $D_X,D_Z$. Summing over incident edges yields per-qubit directional weights $w_i$ that define the cost $\Delta_{\bm w}(E)$.}
  \label{fig:dir-tanner}
\end{figure}
That’s just a weighted Hamming weight: each flipped qubit 
$i$ contributes $w_i$ instead of ``1". Errors on ``directionally important" qubits are thus counted as more expensive.

Figure~\ref{fig:dir-tanner} shows a toy Tanner graph with edge weights. In practice, the entries of $D_X,D_Z$ can be derived from geometry, readout and gate scheduling, or from hardware calibration data, but the theory below requires only that they be nonnegative.

\begin{proposition}[Edge-to-qubit reduction]
\label{prop:perqubit}
For an error indicator $E\in\{0,1\}^n$, define
\[
\Delta_D(E)
:= \sum_{i : E_i = 1} w_i, 
\]
that is, the sum of directional edge weights incident on the support of $E$.
Then, $\Delta_D(E)=\Delta_{\bm w}(E)$ with $w_i$ given by~\eqref{eq:per-qubit}.
\end{proposition}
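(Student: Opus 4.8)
The plan is to recognize that this is fundamentally a regrouping of a finite sum, and to reconcile the two readings of $\Delta_D(E)$ — the per-qubit reading in its definition and the edge-level reading in the surrounding text. First I would start from the definition of $\Delta_{\bm w}(E)$ in~\eqref{eq:weighted-hamming}, namely $\Delta_{\bm w}(E)=\sum_{i=1}^n w_i E_i$, and exploit that $E\in\{0,1\}^n$. For each index $i$, the term $w_i E_i$ equals $w_i$ when $E_i=1$ and vanishes when $E_i=0$, so the sum over all $n$ indices collapses to a sum over the support $\{i:E_i=1\}$. This immediately gives $\Delta_{\bm w}(E)=\sum_{i:E_i=1} w_i=\Delta_D(E)$, which is the claimed identity at the level of per-qubit weights.

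To connect this with the edge-level description (``the sum of directional edge weights incident on the support of $E$''), I would substitute the definition~\eqref{eq:per-qubit} of $w_i$ into the support sum, obtaining
\[
\Delta_D(E)=\sum_{i:E_i=1}\Bigl(\sum_{j\in N_X(i)} D_X(i,j)+\sum_{j\in N_Z(i)} D_Z(i,j)\Bigr).
\]
The key step is then to verify that this double sum counts each directional edge exactly once. Because $D_X$ and $D_Z$ are supported on the edges of bipartite Tanner graphs, every edge joins a single qubit $i$ to a single check $j$; hence each edge contributes to the inner sum only for its unique qubit endpoint. Restricting the outer sum to the support therefore accumulates precisely those edge weights incident on the flipped qubits, with no omission and no double counting.

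I do not anticipate a genuine obstacle, since the result is a bookkeeping identity rather than a substantive claim. The one point that deserves explicit care is the absence of double counting in the edge-to-qubit aggregation: this is exactly where the bipartite structure of the Tanner graph (each edge has a unique qubit endpoint) is used, and it is what licenses passing from ``sum over the support, then over incident edges'' to ``sum over all edges incident on the support.'' With that observation in place, both equalities are immediate and the proposition follows.
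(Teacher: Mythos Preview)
Your argument is correct and matches the paper's own proof, which simply notes that $E_i\in\{0,1\}$ forces $\sum_{i=1}^n w_i E_i=\sum_{i:E_i=1} w_i$. Your additional paragraph unpacking the edge-level reading via the bipartite structure is extra but consistent with the paper's intent; the core step is identical.
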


\begin{proof}
Using $E_i\in\{0,1\}$, 
$
\Delta_D(E)
= \sum_{i=1}^n w_i E_i
= \Delta_{\bm w}(E).
$
\end{proof}

Thus, the directional annotation can be reduced to a per-qubit vector $\bm w$ and a linear functional $\Delta_{\bm w}$, which are the only ingredients used in the enumerator and in the decoder.

\subsection{Class scores and directional enumerator}
For a fixed $Z$-syndrome $s_Z$, we have grouped all compatible $X$-errors into degeneracy classes $\mathcal{D}_X(s_Z)$. In decoding, we ultimately choose between \emph{classes}, not individual representatives, so we need a single ``directional cost'' for each class built from $\Delta_{\bm w}$. 
Because $\Delta_{\bm w}$ is not invariant under adding stabilizers, classes require their own scores.

\begin{definition}[Class score]
For a fixed $Z$-syndrome $s_Z$ and a class $[e]\in\mathcal{D}_X(s_Z)$, the \emph{class score} is
\begin{equation}
\Delta_*([e])\;:=\;\min_{u\in S_X} \Delta_{\bm w}(e+u),
\label{eq:class-score}
\end{equation}
which is the cost of the least-cost representative, or weighted coset leader.
\end{definition}

\begin{definition}[Directional degeneracy enumerator]
For a fixed $Z$-syndrome $s_Z$ and weights $\bm w$, the \emph{directional degeneracy enumerator} is
\begin{equation}
\Gamma_X(s_Z;\beta)\;:=\;\sum_{[e]\in\mathcal{D}_X(s_Z)} \exp\!\big(-\beta\,\Delta_*([e])\big),
\qquad \beta\ge 0.
\label{eq:enum}
\end{equation}
\end{definition}

At $\beta=0$ the enumerator equals $2^k$; its first derivative gives the mean class score and its second derivative the score variance. For $\beta>0$, $\Gamma_X(s_Z;\beta)$ is dominated by classes with small $\Delta_*([e])$, and its decay in $\beta$ measures how many such low-cost classes exist: fast decay means few, slow decay many. Thus $\Gamma_X$ compactly summarizes how degeneracy distributes across directions under the per-qubit penalties $\bm w$.

\begin{lemma}[Tail bound for low-cost classes]
\label{lem:tail}
Let $M_{\le t}$ be the number of classes with $\Delta_*([e])\le t$. For any $\beta>0$ and $t>0$,
\[
M_{\le t} \;\le\; e^{\beta t}\,\Gamma_X(s_Z;\beta).
\]
\end{lemma}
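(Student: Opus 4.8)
The plan is to recognize this as a Chernoff--Markov-type tail bound in which $\Gamma_X(s_Z;\beta)$ plays the role of a moment-generating function for the class-score distribution. The enumerator in~\eqref{eq:enum} is a sum of strictly positive terms indexed by the classes $[e]\in\mathcal{D}_X(s_Z)$, so any sub-sum taken over a chosen subset of classes is itself a valid lower bound for $\Gamma_X$. I would choose that subset to be exactly the low-cost classes, $\{[e]:\Delta_*([e])\le t\}$, whose cardinality is $M_{\le t}$ by definition.

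The key step is a uniform pointwise bound on each retained summand. Since $\beta>0$, the map $x\mapsto e^{-\beta x}$ is strictly decreasing, so every class satisfying $\Delta_*([e])\le t$ obeys $\exp(-\beta\,\Delta_*([e]))\ge e^{-\beta t}$. Discarding the (nonnegative) terms with $\Delta_*([e])>t$ and then applying this bound to the survivors gives
\[
\Gamma_X(s_Z;\beta)\;\ge\;\sum_{[e]:\,\Delta_*([e])\le t}\exp\!\big(-\beta\,\Delta_*([e])\big)\;\ge\;M_{\le t}\,e^{-\beta t}.
\]
Multiplying through by $e^{\beta t}>0$ yields $M_{\le t}\le e^{\beta t}\,\Gamma_X(s_Z;\beta)$, which is the assertion.

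There is essentially no hard obstacle here; the argument is only two lines, and what remains is bookkeeping. First, one must note that every summand is genuinely nonnegative, which is immediate because $\bm w\ge 0$ forces $\Delta_*([e])\ge 0$ and hence $0<\exp(-\beta\,\Delta_*([e]))\le 1$, so dropping the high-cost terms can only decrease the sum. Second, the monotonicity direction of $e^{-\beta x}$ that we exploit requires $\beta>0$, which is exactly the stated hypothesis. I would close by remarking that the bound is tight precisely when all low-cost classes cluster at the threshold $t$, and that it is the discrete analogue of the standard exponential Markov inequality, consistent with the earlier observation that the derivatives of $\Gamma_X$ at $\beta=0$ recover the mean and variance of the class-score distribution.
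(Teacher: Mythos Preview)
Your proof is correct and follows exactly the paper's approach: restrict the sum in~\eqref{eq:enum} to classes with $\Delta_*([e])\le t$, lower-bound each retained exponential by $e^{-\beta t}$, and rearrange. The paper states this in a single line, without the surrounding Chernoff--Markov commentary, but the argument is the same.
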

\begin{proof}
Restricting the sum in \eqref{eq:enum} to classes with cost at most $t$ yields $\Gamma_X(s_Z;\beta)\ge M_{\le t} e^{-\beta t}$.
\end{proof}

Lemma~\ref{lem:tail} shows that, for fixed $\beta$, a small enumerator forces the number of low-cost classes to be small. Directional weighting thus provides a handle for thinning out competitive low-cost classes, which is what we want for decoding: with fewer nearly equivalent low-cost options, the decoder has fewer opportunities to confuse logical errors.

\section{Analytical Properties}
\label{sec:analysis}

The directional enumerator admits useful inequalities and dual representations. We first compare directional and Hamming distances and then give a MacWilliams-type form for a global enumerator.

\subsection{Directional distances}

Let $d$ be the (Hamming) code distance and $d_{\mathrm{S}}$ the minimum Hamming weight among nontrivial stabilizers in $S_X\cup S_Z$. Let $w_{\min}=\min_i w_i$ and $w_{\max}=\max_i w_i$.

\begin{proposition}[Comparison with Hamming distance]
\label{prop:dir-vs-hamming}
Let $d_{\bm{w}}^{\mathrm{S}}$ be the minimum directional cost among nontrivial stabilizers and $d_{\bm{w}}^{\mathrm{L}}$ the minimum directional cost among nontrivial logical operators. Then
\[
w_{\min}\,d_{\mathrm{S}} \le d_{\bm{w}}^{\mathrm{S}}\le w_{\max}\,d_{\mathrm{S}},
\qquad
w_{\min}\,d \le d_{\bm{w}}^{\mathrm{L}}\le w_{\max}\,d.
\]
\end{proposition}
\begin{proof}
For any nonzero vector $v$, we have $w_{\min}\mathrm{wt}(v)\le \sum_i w_i v_i \le w_{\max}\mathrm{wt}(v)$. Applying this to the Hamming-minimizing stabilizers and logicals gives the claim.
\end{proof}

Directional distances therefore inherit scaling from classical distances. Mild anisotropy ($w_{\min}\approx w_{\max}$) preserves distances up to a small factor; strong anisotropy can alter them significantly if logical and stabilizer supports are aligned with low- or high-weight directions.

\subsection{Directionality-induced degeneracy bound}

We bound how many degenerate error classes remain admissible once a directional model is imposed. Let $\llbracket n,k,d_{\min}\rrbracket$ be a CSS code with rate $R=k/n$ and minimum distance $d_{\min}$. Suppose the directional model filters classes according to an admissibility rule. For example, one may require that an admissible class contain a representative whose support lies mostly along the favored direction according to $D_X$.

For each syndrome $s_Z$, let $\mathcal{D}_\delta(s_Z)\subseteq\mathcal{D}_X(s_Z)$ be the subset of classes that are admissible under the directional model. Define the worst-case concentration factor
\[
f(\delta_{\max},R)\;:=\;\sup_{s_Z}\frac{|\mathcal{D}_\delta(s_Z)|}{|\mathcal{D}_X(s_Z)|},
\]
where $\delta_{\max}\ge 0$ is any scalar that summarizes the maximum directional bias, for instance $\delta_{\max}=\max_{i,j}D_X(i,j)$. Since directionality only removes classes, $f(\delta_{\max},R)\le 1$ and $f(0,R)=1$.

\begin{lemma}[Directionality-induced degeneracy bound]
\label{lem:dir-deg}
For every syndrome $s_Z$,
\[
|\mathcal{D}_\delta(s_Z)| \;\le\; 2^{k}\,f(\delta_{\max},R)
\;\le\; 2^{\,n-2d_{\min}+2}\,f(\delta_{\max},R).
\]
\end{lemma}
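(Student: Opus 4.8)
The plan is to split the two-part inequality and supply each piece with its own ingredient. The first inequality I would obtain purely from the definition of the concentration factor. Since
\[
f(\delta_{\max},R)\;=\;\sup_{s_Z}\frac{|\mathcal{D}_\delta(s_Z)|}{|\mathcal{D}_X(s_Z)|}
\]
is a supremum over syndromes, for any fixed $s_Z$ the ratio $|\mathcal{D}_\delta(s_Z)|/|\mathcal{D}_X(s_Z)|$ is at most $f(\delta_{\max},R)$. Rearranging and substituting $|\mathcal{D}_X(s_Z)|=2^{k}$ — which holds for every syndrome by the Degeneracy-classes definition — yields $|\mathcal{D}_\delta(s_Z)|\le 2^{k}\,f(\delta_{\max},R)$ directly.

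For the second inequality, I would reduce it to the purely code-theoretic claim $2^{k}\le 2^{\,n-2d_{\min}+2}$, i.e.\ the exponent bound $k\le n-2d_{\min}+2$. This is exactly the quantum Singleton bound $k\le n-2(d_{\min}-1)$ applied to the $\llbracket n,k,d_{\min}\rrbracket$ code. Given that, and noting $f(\delta_{\max},R)\ge 0$ (indeed $f\le 1$, as already observed since directionality only removes classes), I would multiply both sides of $2^{k}\le 2^{\,n-2d_{\min}+2}$ by the nonnegative scalar $f(\delta_{\max},R)$; monotonicity preserves the inequality and chains it onto the first bound to complete the statement.

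The only substantive ingredient, and hence the main thing to pin down, is the quantum Singleton bound. I expect this to be the crux rather than a genuine obstacle, since it is standard: I would either cite it or sketch the short argument from $k=n-\rank(H_X)-\rank(H_Z)$ together with the distance--rank relations that lower-bound the stabilizer ranks in terms of $d_{\min}$. Everything else — the supremum rearrangement and the monotonicity of multiplication by a nonnegative factor — is routine, so no delicate estimates are required.
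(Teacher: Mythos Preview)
Your proposal is correct and follows essentially the same route as the paper: the first inequality comes from the definition of $f(\delta_{\max},R)$ together with $|\mathcal{D}_X(s_Z)|=2^k$, and the second from the quantum Singleton bound $k\le n-2d_{\min}+2$, which the paper simply cites. If anything, your supremum-rearrangement step is stated more carefully than the paper's own one-line justification.
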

\begin{proof}
The isotropic degeneracy count is $|\mathcal{D}_X(s_Z)|=2^{k}$; admissibility only removes classes, giving $|\mathcal{D}_\delta(s_Z)|\le 2^{k}f(\delta_{\max},R)$. The quantum Singleton bound~\cite{KnillLaflamme1997} implies $k\le n-2d_{\min}+2$, hence $2^{k}\le 2^{n-2d_{\min}+2}$.
\end{proof}

Lemma~\ref{lem:dir-deg} makes explicit how directional structure can reduce the effective degeneracy of the code. The concentration factor $f(\delta_{\max},R)$ depends on the directional rule and on the code family; it is monotone nonincreasing in the strength of the bias. In examples where directional structure correlates with logical operators, one expects $f(\delta_{\max},R)$ to be significantly smaller than $1$ for moderate $\delta_{\max}$.

\subsection{Global enumerator and MacWilliams representation}

Let $C:=C_X\cap C_Z$ and consider the global directional enumerator
\begin{equation}
\Gamma(\bm w;\alpha)\;:=\;\sum_{v\in C} e^{\alpha \langle \bm w,v\rangle},
\qquad \alpha>0.
\label{eq:global-enum}
\end{equation}
This is the specialization at $(x_i,y_i)=(1,e^{\alpha w_i})$ of the per-coordinate complete weight enumerator
\[
W_C(\{x_i,y_i\}_{i=1}^n)\;:=\;\sum_{v\in C}\prod_{i=1}^n x_i^{1-v_i}y_i^{v_i}.
\]

\begin{theorem}[MacWilliams-type specialization]
\label{thm:macw}
Let $C^\perp$ be the dual of $C$ under the standard inner product. Then
\[
\Gamma(\bm w;\alpha)
= W_C(\{1,e^{\alpha w_i}\})
= \frac{1}{|C^\perp|}\sum_{u\in C^\perp}\prod_{i=1}^n\big(1+(-1)^{u_i}e^{\alpha w_i}\big).
\]
\end{theorem}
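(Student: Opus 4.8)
The plan is to prove the two equalities separately: the first is a direct substitution into the definition of the complete weight enumerator, while the second is a MacWilliams-type identity that I would obtain by the standard Fourier (Poisson summation) argument over $\mathbb{F}_2^n$, specialized to the per-coordinate weights $e^{\alpha w_i}$.

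For the first equality I would just substitute $x_i=1$ and $y_i=e^{\alpha w_i}$. For $v\in C$ the monomial $\prod_i x_i^{1-v_i}y_i^{v_i}$ collapses to $\prod_i (e^{\alpha w_i})^{v_i}=e^{\alpha\sum_i w_i v_i}=e^{\alpha\langle\bm w,v\rangle}$, since each factor $x_i^{1-v_i}=1$ and the nontrivial contribution appears exactly on the support of $v$. Summing over $v\in C$ then reproduces the definition of $\Gamma(\bm w;\alpha)$, giving $\Gamma(\bm w;\alpha)=W_C(\{1,e^{\alpha w_i}\})$ with no further work.

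For the second equality the engine is the orthogonality relation for the additive characters of $\mathbb{F}_2^n$. Because $C=(C^\perp)^\perp$ over $\mathbb{F}_2$, the membership indicator admits the representation $\mathbf{1}[v\in C]=|C^\perp|^{-1}\sum_{u\in C^\perp}(-1)^{\langle u,v\rangle}$: the inner character sum equals $|C^\perp|$ precisely when $v$ is orthogonal to all of $C^\perp$ (that is, $v\in C$) and vanishes otherwise by the usual sign-cancellation pairing. I would insert this identity into $W_C(\{1,e^{\alpha w_i}\})=\sum_{v\in C}\prod_i (e^{\alpha w_i})^{v_i}$, thereby extending the range of summation from $C$ to the full cube $\mathbb{F}_2^n$, and then interchange the (finite) sums so that the sum over $u\in C^\perp$ sits outside.

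The payoff step is that, once the ambient sum runs over all of $\mathbb{F}_2^n$, both the character $(-1)^{\langle u,v\rangle}=\prod_i(-1)^{u_i v_i}$ and the weight $\prod_i (e^{\alpha w_i})^{v_i}$ factor coordinatewise, so the inner sum over $v$ becomes a product of independent single-bit sums $\sum_{v_i\in\{0,1\}}(-1)^{u_i v_i}(e^{\alpha w_i})^{v_i}=1+(-1)^{u_i}e^{\alpha w_i}$; collecting these yields exactly $|C^\perp|^{-1}\sum_{u\in C^\perp}\prod_i\big(1+(-1)^{u_i}e^{\alpha w_i}\big)$. I expect the only conceptually delicate point to be the justification of the indicator identity, which rests on the biorthogonality $(C^\perp)^\perp=C$ and not on any self-duality of $C$; everything after the summation swap is a routine coordinatewise factorization, and since all sums are finite there are no convergence caveats to address.
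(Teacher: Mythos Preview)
Your proposal is correct and follows the same approach as the paper: substitution for the first equality, and the MacWilliams identity (i.e., Poisson summation over $\mathbb{F}_2^n$) for the second. The only difference is that the paper simply \emph{invokes} the MacWilliams identity for the complete weight enumerator with a citation, whereas you unpack its standard proof via the character-sum indicator $\mathbf{1}[v\in C]=|C^\perp|^{-1}\sum_{u\in C^\perp}(-1)^{\langle u,v\rangle}$ and coordinatewise factorization; your version is therefore more self-contained but not a different route.
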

\begin{proof}
The first equality is the specialization $x_i=1$, $y_i=e^{\alpha w_i}$. The second is the MacWilliams identity for the complete weight enumerator over $\mathbb{F}_2$~\cite{rains_quantum_weight_1998}, which is a discrete Fourier transform on $\{0,1\}^n$.
\end{proof}

Theorem~\ref{thm:macw} connects directional degeneracy to the dual code and shows that the dependence on $\bm w$ factorizes over coordinates in the dual domain. This form is useful when $C^\perp$ has a structured description, such as in hypergraph-product or Tanner constructions.

The logarithm of $\Gamma$ has convenient calculus identities. Writing $\pi_{\alpha,\bm w}(v)\propto e^{\alpha\langle \bm w,v\rangle}$ for the Gibbs distribution on $C$, we have
\begin{equation}
\frac{\partial}{\partial w_i}\log \Gamma(\bm w;\alpha)
= \alpha\,\mathbb{E}_{\pi_{\alpha,\bm w}}[v_i].
\label{eq:grad}
\end{equation}
Thus increasing $w_i$ suppresses coordinates with large expected occupancy. The function $\log\Gamma(\bm w;\alpha)$ is convex in $\bm w$ since it is the log of a sum of exponentials of linear forms. Small $\ell_\infty$ perturbations of $\bm w$ change $\log\Gamma$ by at most $O(\alpha n\|\delta\|_\infty)$; a simple argument using $|\langle \delta,v\rangle|\le n\|\delta\|_\infty$ and bounding the ratio of exponentials yields a global Lipschitz constant.

\section{Anisotropic Decoding}
\label{sec:decoder}

We now describe how the directional weights are used in a practical BP+OSD decoder. The underlying code is unchanged; only the prior and the scoring function are modified.

\subsection{Site-dependent priors from directional weights}

Consider a memoryless binary-input channel (for one CSS side) with independent error probabilities $p_i$ on each qubit. Under independence, the negative log-likelihood of an error pattern $E\in\{0,1\}^n$ is
\[
-\log\Pr(E) = \sum_i E_i \log\frac{p_i}{1-p_i} + \text{const}
= \sum_i w_i^{\mathrm{MAP}} E_i + \text{const},
\]
with $w_i^{\mathrm{MAP}}=\log\frac{1-p_i}{p_i}$. Thus, MAP decoding corresponds to weighted min-sum with weights $w_i^{\mathrm{MAP}}$.

The directional weights $\bm w$ from~\eqref{eq:per-qubit} provide a natural way to parametrize $p_i$. A convenient one-parameter family is
\begin{equation}
p_i(\beta)\;=\; \frac{p_0\,e^{\beta w_i}}{\frac{1}{n}\sum_{j=1}^n e^{\beta w_j}},
\qquad
\ell_i(\beta)=\log\frac{1-p_i(\beta)}{p_i(\beta)},
\label{eq:tilting}
\end{equation}
where $p_0$ is a target average physical error rate and $\beta\ge 0$ is a directional strength parameter. For $\beta=0$ this reduces to the isotropic prior $p_i=p_0$. Increasing $\beta$ makes errors more likely on qubits with larger $w_i$.

The LLRs $\ell_i(\beta)$ are passed to a standard belief-propagation decoder on the Tanner graph of $H_X$ or $H_Z$. From the channel perspective, \eqref{eq:tilting} is an exponential tilt of a baseline isotropic distribution, and the same $\beta$ also appears in the directional enumerator through the weights.

\subsection{MAP on trees and coset posteriors}

When the factor graph of $H$ is cycle-free, belief propagation is exact.


\begin{theorem}[MAP on trees equals weighted min-sum]
\label{thm:map-tree}
Let $H\in\{0,1\}^{m\times n}$ be a parity-check matrix whose factor graph is a forest. Suppose the physical noise on each qubit is independent with error probabilities $p_i\in(0,1/2)$ and define
\[
w_i \;:=\; \log\frac{1-p_i}{p_i}.
\]
For a given syndrome $s$, let $\mathcal{S}(s):=\{E\in\{0,1\}^n : H E = s\}$ be the affine solution set. If each syndrome $s$ identifies a unique coset in $\mathcal{S}(s)$ (so that the MAP solution is unique up to stabilizers), then the MAP estimate
\[
E^\star \;\in\; \arg\max_{E\in\mathcal{S}(s)} \Pr(E\mid s)
\]
coincides with
\[
E^\star \;=\; \arg\min_{E\in\mathcal{S}(s)}\;\sum_{i=1}^n w_i E_i,
\]
and min-sum belief propagation on the factor graph of $H$ with unary costs $w_i$ returns $E^\star$.
\end{theorem}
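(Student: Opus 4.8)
The plan is to separate the statement into two independent assertions: first, that the maximum-a-posteriori configuration over $\mathcal{S}(s)$ is exactly a minimizer of the linear cost $\sum_i w_i E_i$; and second, that min-sum belief propagation on a forest factor graph computes such a minimizer exactly. The first is an exact-inference identity, while the second is the classical dynamic-programming-on-trees fact specialized to the $(\min,+)$ semiring.

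First I would establish the MAP $=$ min-weight equivalence. Because the syndrome is a deterministic function of the error, $\Pr(s \mid E) = \mathbf{1}[HE = s]$, so Bayes' rule gives $\Pr(E \mid s) \propto \Pr(E)\,\mathbf{1}[E \in \mathcal{S}(s)]$; over the support $\mathcal{S}(s)$ the posterior is proportional to the prior $\Pr(E)$. Under independence, $\Pr(E) = \prod_i p_i^{E_i}(1-p_i)^{1-E_i}$, and taking $-\log$ yields $-\log\Pr(E) = \sum_i w_i E_i + C$ with $w_i = \log\frac{1-p_i}{p_i}$ and $C = -\sum_i \log(1-p_i)$ independent of $E$. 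Since $p_i < 1/2$ forces $w_i > 0$, maximizing the posterior over $\mathcal{S}(s)$ is equivalent to minimizing $\sum_i w_i E_i$ over $\mathcal{S}(s)$; the uniqueness hypothesis guarantees this optimizer is a single configuration $E^\star$, so the two displayed characterizations agree. This reuses exactly the log-likelihood computation already recorded in the site-dependent priors subsection.

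Next I would cast the constrained minimization as a factor-graph energy. I write the objective as $\sum_i \phi_i(E_i) + \sum_c \psi_c(E_{\partial c})$, where the unary potentials are $\phi_i(E_i) = w_i E_i$ and each check $c$ contributes a hard parity factor $\psi_c(E_{\partial c}) = 0$ if $\bigoplus_{i \in \partial c} E_i = s_c$ and $+\infty$ otherwise. Minimizing this energy is identical to minimizing $\sum_i w_i E_i$ subject to $HE = s$, and the factor graph of this energy is precisely the factor graph of $H$ with the stated unary costs. Min-sum is the scheme that propagates partial-minimum ($\min$-of-sum) messages along the edges of this graph.

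The core step is exactness on the forest. Since the factor graph is a forest, I would root each connected component and run a leaf-to-root schedule: a variable-to-check message records the minimum accumulated cost of its pending subtree as a function of the variable's bit, and a check-to-variable message takes the minimum over subtree assignments consistent with the local parity. Because the graph has no cycles, the subtrees hanging off distinct edges are disjoint, so the global cost factorizes as a sum over edge-disjoint subtrees; the distributive law for $(\min,+)$ then lets the global minimum be computed by the local $\min$/$+$ recursions, which is the standard dynamic-programming argument. The messages stabilize after a number of rounds equal to the maximum depth, and the resulting min-marginals identify $E^\star$ coordinatewise, with a backward pass resolving any remaining choices. The main obstacle is the bookkeeping of this exactness claim: handling the hard $+\infty$ parity factors cleanly, verifying that the min-marginals pick out the \emph{unique} global optimizer (which is where the uniqueness hypothesis is essential to avoid inconsistent tie-breaking across coordinates), and confirming that the forest structure guarantees the required subtree independence. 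None of these is deep, but the argument must be stated carefully rather than merely invoked, since the equivalence of min-sum fixed points with the global optimum fails in general once cycles are present.
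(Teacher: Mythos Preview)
Your proposal is correct and follows essentially the same route as the paper: the same Bayes/log-likelihood reduction to $\sum_i w_i E_i$ over $\mathcal{S}(s)$, the same factor-graph energy with unary costs $w_iE_i$ and hard $0/+\infty$ parity factors, and then exactness of min-sum on a forest. The only difference is that the paper dispatches the last step by citing the standard result that min-sum BP on a tree factor graph returns the exact global minimizer, whereas you sketch the underlying dynamic-programming argument (rooting each component, leaf-to-root messages, $(\min,+)$ distributivity over edge-disjoint subtrees); your additional remarks about handling $+\infty$ factors and using the uniqueness hypothesis to ensure consistent tie-breaking across min-marginals are sound refinements that the paper leaves implicit.
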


On loopy graphs, BP is approximate, but the structure of coset posteriors remains clear.

\begin{proposition}[Coset posterior]
\label{prop:coset-posterior}
For independent noise with probabilities $p_i$ and weights $w_i$, the posterior probability of a coset $\mathcal{C}\subseteq\{E:HE=s\}$ given syndrome $s$ satisfies
\[
\Pr(\mathcal{C}\mid s)\;\propto\;
\sum_{v\in \mathcal{C}} \exp\!\Big(-\sum_{i=1}^n w_i v_i\Big).
\]
\end{proposition}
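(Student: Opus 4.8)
The plan is to start from the factorized likelihood of a single error pattern under independent noise, rewrite it in Gibbs form, and then sum over the coset, so that the proportionality falls out immediately. First I would write, for any $E\in\{0,1\}^n$,
\[
\Pr(E)=\prod_{i=1}^n p_i^{E_i}(1-p_i)^{1-E_i}
=\Big(\prod_{i=1}^n(1-p_i)\Big)\exp\!\Big(\sum_{i=1}^n E_i\log\tfrac{p_i}{1-p_i}\Big).
\]
Using the identification $w_i=\log\frac{1-p_i}{p_i}$ inherited from Theorem~\ref{thm:map-tree}, the exponent is exactly $-\sum_i w_i E_i$, so $\Pr(E)=Z_0\,\exp(-\sum_i w_i E_i)$ with $Z_0:=\prod_i(1-p_i)$ a constant independent of $E$.

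Next I would apply Bayes' rule, using the fact that the syndrome is a deterministic function of the error, $s=HE$. Hence $\Pr(s\mid E)=\mathbf{1}[HE=s]$ and
\[
\Pr(E\mid s)=\frac{\Pr(E)\,\mathbf{1}[HE=s]}{\Pr(s)}.
\]
For any $E$ in the affine solution set $\{E:HE=s\}$ this equals $(Z_0/\Pr(s))\exp(-\sum_i w_i E_i)$, and it vanishes off that set. The prefactor $Z_0/\Pr(s)$ depends only on the observed syndrome and the channel, not on which solution (or which coset) is being scored.

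Finally I would sum over the coset. Since $\Pr(\mathcal{C}\mid s)=\sum_{v\in\mathcal{C}}\Pr(v\mid s)$ and every $v\in\mathcal{C}$ satisfies $Hv=s$, I obtain
\[
\Pr(\mathcal{C}\mid s)=\frac{Z_0}{\Pr(s)}\sum_{v\in\mathcal{C}}\exp\!\Big(-\sum_{i=1}^n w_i v_i\Big),
\]
and dropping the syndrome-dependent constant yields the claimed proportionality. There is no genuine technical obstacle here; the argument is essentially bookkeeping. The only point requiring care is conceptual rather than computational: one must observe that the normalizing constant $Z_0/\Pr(s)$ is shared by \emph{all} cosets of a fixed syndrome, which is precisely what makes the ``$\propto$'' meaningful and lets the decoder compare cosets through their unnormalized Gibbs masses $\sum_{v\in\mathcal{C}}e^{-\langle\bm w,v\rangle}$, linking this posterior directly to the directional enumerator $\Gamma_X$ of~\eqref{eq:enum}.
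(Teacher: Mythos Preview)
Your proof is correct and follows the same approach as the paper: write the independent-noise likelihood in Gibbs form $\Pr(E)\propto e^{-\sum_i w_i E_i}$, condition on the syndrome, and sum over representatives of the coset. You have simply spelled out in full the two-line argument the paper gives, including the explicit identification of the shared normalizing constant $Z_0/\Pr(s)$.
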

\begin{proof}
Conditioning on $HE=s$ does not change the relative likelihoods within the solution set. The likelihood of $v$ is proportional to $\exp\big(-\sum_i w_i v_i\big)$; the posterior of the coset is obtained by summing over its representatives.
\end{proof}

Thus, degeneracy enters through a multiplicity-weighted partition function on each coset, controlled by $w_i$. Changing $\beta$ in~\eqref{eq:tilting} modifies these partition functions and shifts which cosets have highest posterior mass.

\subsection{Anisotropic BP+OSD decoder}

In practice, we decode each CSS side as follows. The directional model provides $D_X,D_Z$ and per-qubit weights $\bm w$. For a given physical error rate $p_0$ and bias parameter $\beta$, we form $p_i(\beta)$ and LLRs $\ell_i(\beta)$ from~\eqref{eq:tilting}. These LLRs initialize a min-sum BP decoder, which is run for a number of iterations, and the tentative estimate is further refined by ordered-statistics decoding (OSD) of moderate order~\cite{panteleev_kalachev_deg_2021}. The same directional weights are used to rank candidates and break ties: among syndrome-consistent patterns, the decoder prefers smaller values of $\Delta_{\bm w}(E)$. 
The key point is that the code, the Tanner graph, and the OSD implementation are unchanged. The only difference between isotropic and anisotropic decoding is in how LLRs and candidate scores are computed from the directional weights and the scalar parameter $\beta$.

\section{Numerical Results}
\label{sec:numerics}


This section illustrates how anisotropic priors affect finite-length performance, via directional weights $w_i$ and a scalar $\beta$. The same machinery can bias Tanner-graph edges in many other ways, e.g.\ stripwise piecewise-constant $w_i$ for readout rows, checkerboard or layer-dependent weights modelling interleaved hardware sublattices, or radial gradients centred on a particularly noisy region of the device. 

Note that anisotropic/directional priors are not a silver bullet: for codes whose degeneracy is essentially isotropic (e.g.\ ensemble-style qLDPCs with no meaningful geometric embedding) or for hardware where the noise is close to i.i.d., a one-dimensional anisotropic tilt may offer little or no benefit and can even degrade performance if the assumed orientation is badly misaligned with the true noise. More generally, when bias arises from complicated cross-qubit correlations rather than a smooth gradient (e.g.\ strongly gate- or chip-layout–dependent error patterns), richer hardware models are likely needed beyond the simple directional field considered here.

\subsection{Orientation--based directional priors} 
\label{sec:ne3n-toric-orientation}

We now specialise the directional framework of Section~\ref{sec:decoder}
(weights $w_i$, tilted probabilities $p_i(\beta)$, and LLRs $\ell_i(\beta)$)
to two geometries: the planar $\llbracket 36,4\rrbracket$ NE3N code and the
$\llbracket 2L^2,2,L\rrbracket$ toric code ($L=9$).

\paragraph{NE3N planar code.}
The NE3N code \cite{geher2025directional} is realised on an $18\times 4$ rectangular lattice with a bipartite data/ancilla layout, giving each physical qubit $i$ integer coordinates $(x_i,y_i)\in\mathbb{Z}^2$.  To model a
horizontal hardware anisotropy (e.g.\ control lines running left–to–right),
we choose the scalar coordinate
\begin{equation}\label{eq:coord}
  c_i := x_i
\end{equation}
in the general construction of Section~\ref{sec:decoder}, and obtain
directional weights $w_i$ by standardising $\{c_i\}$ across all data qubits, to obtain
dimensionless directional weights
\begin{equation}\label{eq:mean_sigma}
  \bar c \;:=\; \frac{1}{n}\sum_{i=1}^n c_i,\qquad
  \sigma_c^2 \;:=\; \frac{1}{n-1}\sum_{i=1}^n (c_i-\bar c)^2,
\end{equation}
\begin{equation}\label{eq:qbit_wt}
  w_i \;:=\; \frac{c_i - \bar c}{\sigma_c},
\end{equation}
Qubits near the right edge then have $w_i>0$, those near the left edge have
$w_i<0$, with a smooth gradient in between.  

\paragraph{Toric code.}
For the toric code, we consider an $L\times L$ square lattice with periodic
boundary conditions, with one qubit on each horizontal and vertical edge.
Horizontal edges are labelled
\[
  \mathrm{idx}_h(x,y) = yL + x,\quad
  \mathrm{coord}\big(\mathrm{idx}_h(x,y)\big) = (2x,2y),
\]
and vertical edges
\[
  \mathrm{idx}_v(x,y) = L^2 + yL + x,\quad
  \mathrm{coord}\big(\mathrm{idx}_v(x,y)\big) = (2x+1,2y+1),
\]
for $x,y\in\{0,\dots,L-1\}$ (with arithmetic modulo $L$).  This embeds all
$2L^2$ qubits at integer coordinates $(x_i,y_i)\in\{0,\dots,2L-1\}^2$ on two
interleaved checkerboard sublattices.  To emulate a hardware gradient along one spatial axis, we again take
\(
  c_i := x_i
\)
and form $w_i$ by standardising $\{c_i\}$ as before.  Qubits at larger
$x$–coordinate are thus “downstream’’ (positive $w_i$), and errors become more
likely there as $\beta$ increases.  

\paragraph{Alternative directional fields.}
The same mechanism can be used to define $w_i$ in other hardware and code
geometries.  For example:
\begin{itemize}
  \item \emph{Vertical orientation:} set $c_i=y_i$ instead of $x_i$, so that
        directionality favours rows rather than columns.
  \item \emph{Strip-wise weights:} choose a subset of ``favoured'' columns
        $\mathcal{C}\subset\{0,\dots,L_x-1\}$ and define
        $w_i = +w_0$ if $x_i\in\mathcal{C}$ and $w_i=-w_0$ otherwise, for
        some fixed contrast $w_0>0$.
  \item \emph{Radial gradient:} for codes embedded in a disc or annulus,
        define $c_i$ as the radial coordinate and apply the same
        standardisation and tilt as in~\eqref{eq:mean_sigma}–\eqref{eq:qbit_wt}
        to favour qubits near or away from a boundary.
\end{itemize}
All of these choices are compatible with the directional enumerator and
anisotropic decoding framework developed in Sections~\ref{sec:dir-enum}
and~\ref{sec:decoder}; the only change is in how the per-qubit weights
$w_i$ are instantiated for a given code family and hardware noise model.

\subsection{Logical error-rate behavior}

\begin{figure}
    \centering
    \includegraphics[width=0.9\linewidth]{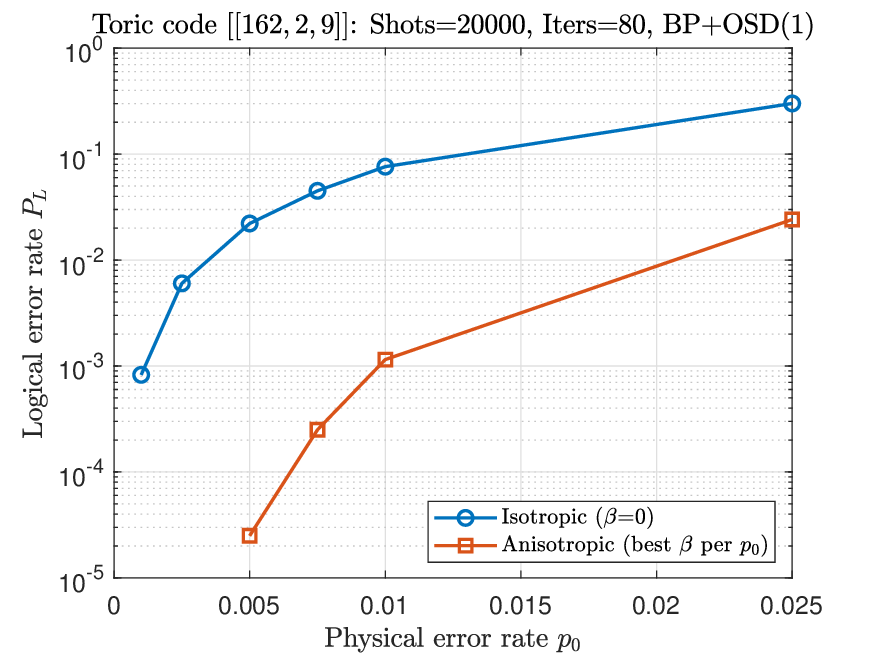}
    \caption{Logical error rate vs.\ physical error rate for the $\llbracket 162,2,9\rrbracket$ Toric code}
    \label{fig:toric_PL_p0}
\end{figure}

\begin{figure}
    \centering
    \includegraphics[width=0.9\linewidth]{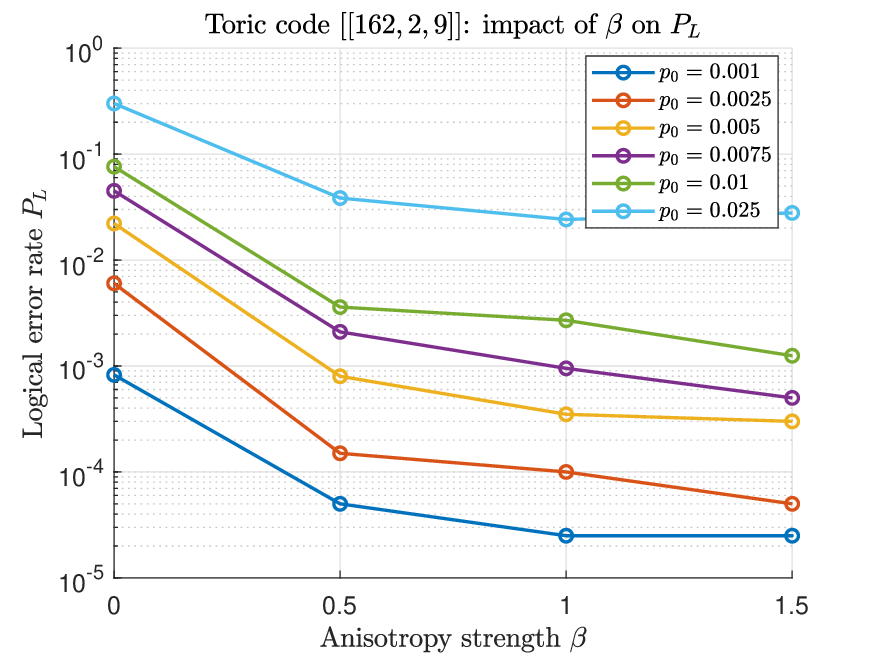}
    \caption{$P_L$ versus anisotropy strength $\beta$ for the $\llbracket 162,2,9\rrbracket$ Toric code}
    \label{fig:toric_PL_beta}
\end{figure}

\begin{figure}
    \centering
    \includegraphics[width=0.9\linewidth]{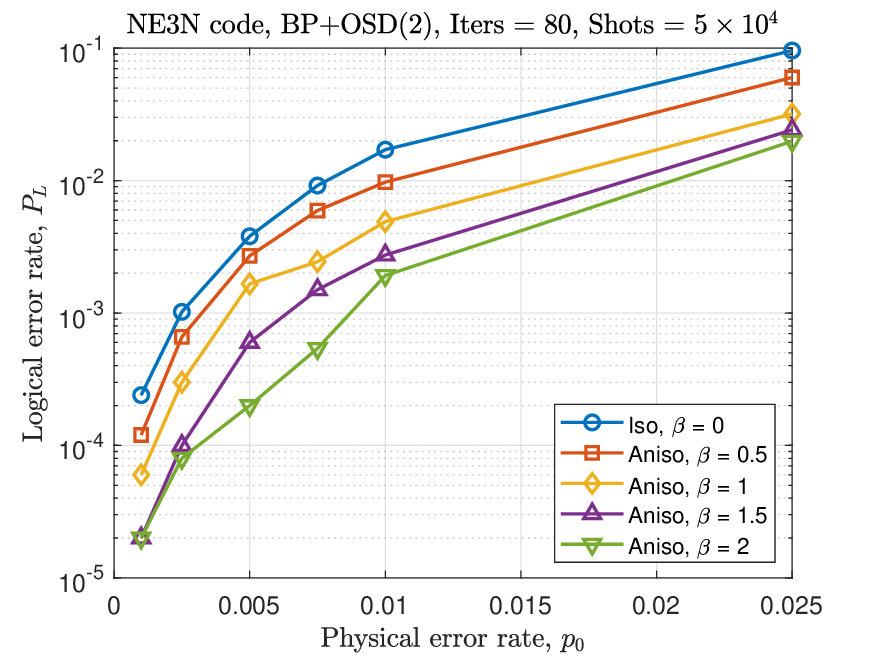}
    \caption{Logical error rate vs.\ physical error rate for the $\llbracket 36,4\rrbracket$ NE3N code} 
    \label{fig:ne3n_PL_p0}
\end{figure}

Fig.~\ref{fig:toric_PL_p0} shows that across the range $p_0\approx 10^{-3}$--$10^{-2}$, the directional prior, aligned with a geometric orientation of the toric lattice, reduces the logical error rate by roughly one to two orders of magnitude without modifying the code or the decoder architecture. 
Fig.~\ref{fig:toric_PL_beta} illustrates how the logical error rate varies with the anisotropy strength~$\beta$, treating the isotropic decoder
as the $\beta=0$ baseline. The trend confirms that the parameter~$\beta$ provides an effective and tunable control for enhancing BP+OSD decoding performance. 
For the $\llbracket 36,4\rrbracket$ NE3N code, Fig.~\ref{fig:ne3n_PL_p0} shows that orientation-based priors systematically outperform isotropic BP+OSD(2), reducing the logical error rate by roughly one order of magnitude across the tested physical error rates.

\section{Conclusion}
\label{sec:conclusion}
We introduced a directional framework for quantum LDPC decoding that annotates Tanner graphs with orientation-aware edge weights, aggregates them into per-qubit directional weights, and feeds these into both a directional degeneracy enumerator and an anisotropic BP+OSD decoder. The theory relates directional and Hamming distances, bounds the number of degenerate error classes per syndrome in terms of rate, distance, and directional bias, and admits a MacWilliams-type representation of the enumerator, while finite-length simulations show that modest anisotropy, controlled by a single parameter $\beta$, can already yield substantial logical error-rate reductions without changing the code or decoder architecture. The framework is deliberately simple and portable: per-qubit weights can be derived from geometry and scheduling, then passed through the same LLR mapping and BP+OSD pipeline, and the same machinery can be layered on top of Pauli-biased noise by starting from different baseline $X$- and $Z$-error rates and applying the spatial tilt separately to each. Extending the numerics to the larger families of codes and scenarios, incorporating circuit-level and correlated noise, and using the enumerator’s gradient identities to learn directional priors from data are natural directions for further work.


\bibliographystyle{IEEEtran}
\bibliography{refs}

\begin{thebibliography}{10}
\providecommand{\url}[1]{#1}
\csname url@samestyle\endcsname
\providecommand{\newblock}{\relax}
\providecommand{\bibinfo}[2]{#2}
\providecommand{\BIBentrySTDinterwordspacing}{\spaceskip=0pt\relax}
\providecommand{\BIBentryALTinterwordstretchfactor}{4}
\providecommand{\BIBentryALTinterwordspacing}{\spaceskip=\fontdimen2\font plus
\BIBentryALTinterwordstretchfactor\fontdimen3\font minus \fontdimen4\font\relax}
\providecommand{\BIBforeignlanguage}[2]{{%
\expandafter\ifx\csname l@#1\endcsname\relax
\typeout{** WARNING: IEEEtran.bst: No hyphenation pattern has been}%
\typeout{** loaded for the language `#1'. Using the pattern for}%
\typeout{** the default language instead.}%
\else
\language=\csname l@#1\endcsname
\fi
#2}}
\providecommand{\BIBdecl}{\relax}
\BIBdecl

\bibitem{tillich_zemor_2014}
J.~Tillich and G.~Z{\'e}mor, ``Quantum ldpc codes with positive rate and minimum distance proportional to the square root of the blocklength,'' \emph{IEEE Transactions on Information Theory}, vol.~60, no.~2, pp. 1193--1202, 2014.

\bibitem{leverrier_tillich_zemor_2015}
A.~Leverrier, J.~Tillich, and G.~Z{\'e}mor, ``Quantum expander codes,'' \emph{2015 IEEE 56th Annual Symposium on Foundations of Computer Science (FOCS)}, pp. 810--824, 2015, see also arXiv:1504.00822.

\bibitem{evra_kaufman_zemor_2020}
S.~Evra, T.~Kaufman, and G.~Z{\'e}mor, ``Decodable quantum ldpc codes beyond the $\sqrt{n}$ distance barrier using high dimensional expanders,'' \emph{SIAM Journal on Computing}, vol.~51, no.~5, pp. 1436--1476, 2022, preprint: arXiv:2004.07935.

\bibitem{panteleev_kalachev_asymp_2022}
P.~Panteleev and G.~Kalachev, ``Asymptotically good quantum and locally testable classical ldpc codes,'' in \emph{Proceedings of the 54th Annual ACM SIGACT Symposium on Theory of Computing (STOC)}, 2022, pp. 375--388, see also arXiv:2111.03654.

\bibitem{leverrier_zemor_tanner_2022}
A.~Leverrier and G.~Z{\'e}mor, ``Quantum tanner codes,'' \emph{2022 IEEE 63rd Annual Symposium on Foundations of Computer Science (FOCS)}, pp. 872--883, 2022, see also arXiv:2202.13641.

\bibitem{breuckmann_eberhardt_2021}
N.~P. Breuckmann and J.~N. Eberhardt, ``Quantum ldpc codes,'' \emph{PRX Quantum}, vol.~2, no.~4, p. 040101, 2021.

\bibitem{poulin_chung_2008}
D.~Poulin and Y.~Chung, ``On the iterative decoding of sparse quantum codes,'' \emph{Quantum Information \& Computation}, vol.~8, no.~10, pp. 987--1000, 2008, arXiv:0801.1241.

\bibitem{panteleev_kalachev_deg_2021}
P.~Panteleev and G.~Kalachev, ``Degenerate quantum ldpc codes with good finite length performance,'' \emph{Quantum}, vol.~5, p. 585, 2021.

\bibitem{tuckett_biased_prx_2019}
D.~K. Tuckett, A.~S. Darmawan, C.~T. Chubb, S.~Bravyi, S.~D. Bartlett, and S.~T. Flammia, ``Tailoring surface codes for highly biased noise,'' \emph{Physical Review X}, vol.~9, no.~4, p. 041031, 2019.

\bibitem{bonilla_xzzx_2021}
J.~P. Bonilla~Ataides, D.~K. Tuckett, S.~D. Bartlett, S.~T. Flammia, and B.~J. Brown, ``The xzzx surface code,'' \emph{Nature Communications}, vol.~12, no.~1, p. 217, 2021.

\bibitem{roffe_bias_tailored_qldpc_2023}
J.~Roffe, L.~Z. Cohen, A.~O. Quintavalle, D.~Chandra, and E.~T. Campbell, ``Bias-tailored quantum ldpc codes,'' \emph{Quantum}, vol.~7, p. 1005, 2023, see also arXiv:2202.01702.

\bibitem{KnillLaflamme1997}
E.~Knill and R.~Laflamme, ``Theory of quantum error-correcting codes,'' \emph{Physical Review A}, vol.~55, no.~2, pp. 900--911, 1997.

\bibitem{rains_quantum_weight_1998}
E.~M. Rains, ``Quantum weight enumerators,'' \emph{IEEE Transactions on Information Theory}, vol.~44, no.~4, pp. 1388--1394, 1998, see also arXiv:quant-ph/9612015.

\bibitem{geher2025directional}
G.~P. Geh{\'e}r, D.~Byfield, and A.~Ruban, ``Directional codes: a new family of quantum ldpc codes on hexagonal-and square-grid connectivity hardware,'' \emph{arXiv preprint arXiv:2507.19430}, 2025.

\bibitem{kschischang2002factor}
F.~R. Kschischang, B.~J. Frey, and H.-A. Loeliger, ``Factor graphs and the sum-product algorithm,'' \emph{IEEE Transactions on information theory}, vol.~47, no.~2, pp. 498--519, 2002.

\end{thebibliography}

\newpage
\appendix
\section*{Proof of Theorem \ref{thm:map-tree}}
\begin{proof}
Because errors are independent,
\[
\Pr(E) = \prod_{i=1}^n p_i^{E_i}(1-p_i)^{1-E_i}.
\]
The syndrome is a deterministic function $s=H E$, so
$\Pr(s\mid E)=\mathbf{1}_{\{H E=s\}}$ and hence
\[
\Pr(E\mid s) \;\propto\; \mathbf{1}_{\{H E=s\}}\Pr(E),
\]
so maximizing $\Pr(E\mid s)$ over $\mathcal{S}(s)$ is equivalent to
maximizing $\Pr(E)$ over $\mathcal{S}(s)$.

Taking minus the logarithm and discarding the additive constant that does not depend on $E$ gives
\[
-\log \Pr(E) = \sum_{i=1}^n E_i \log\frac{p_i}{1-p_i} + \mathrm{const}
= \sum_{i=1}^n w_i E_i + \mathrm{const},
\]
with $w_i=\log\frac{1-p_i}{p_i}$. Thus
\[
\arg\max_{E\in\mathcal{S}(s)} \Pr(E\mid s)
=
\arg\min_{E\in\mathcal{S}(s)} \sum_{i=1}^n w_i E_i,
\]
which proves the weighted min-sum characterization of the MAP estimate.

To see that min-sum BP produces this minimizer, represent the posterior as a factor graph
with variable nodes $E_i$ and check nodes enforcing the parity constraints $H_a E = s_a$.
The cost function is
\[
F(E) = \sum_{i=1}^n w_i E_i + \sum_{a=1}^m \Psi_a(E_{N(a)}),
\]
where $\Psi_a$ is $0$ if the parity at check $a$ is satisfied and $+\infty$ otherwise.
By assumption, the factor graph is a forest, i.e., a disjoint union of trees.
On a tree factor graph, min-sum (equivalently, max-product in the log domain)
belief propagation computes the exact global minimizer of any such additive,
locally factored cost~\cite{kschischang2002factor}. Applied to $F(E)$, min-sum BP
therefore returns the unique minimizer in $\mathcal{S}(s)$, which is the MAP estimate $E^\star$.
\end{proof}

\end{document}